\documentclass[cleveref,autoref,thm-restate]{article}

\usepackage{header}
\usepackage{tda}

\begin{document}
\thispagestyle{empty}
\maketitle

\begin{abstract}
    Persistent homology is a tool that can be employed to summarize the \emph{shape}
of data by quantifying homological features.  When the data is an object in
$\R^d$, the (augmented) persistent homology transform~((A)PHT) is a
family of persistence diagrams, parameterized by directions in the ambient
space.  A recent advance in understanding the PHT used the framework of
reconstruction in order to find finite a set of directions to faithfully
represent the shape, a result that is of both theoretical and practical
interest. In this paper, we improve upon this result and
present an improved algorithm for graph---and, more generally
one-skeleton---reconstruction.  The improvement comes in reconstructing the
edges, where we use a radial
binary (multi-)search.  The binary search
employed takes advantage of the fact that the edges
can be ordered radially with respect to a reference plane, a feature
unique to graphs.

\end{abstract}

\section{Introduction}\label{sec:intro}
At the heart of inverse problems in the field of topological data analysis is
the following question: \emph{how many persistence diagrams are needed to 
faithfully represent a shape?}  Since the introduction of persistence diagrams,
it has been known that many ``shapes'' can share the same persistence diagram.
With enough persistence diagrams, we arrive at a set of parameterized
diagrams (that is, diagrams labeled by direction) that uniquely, or
\emph{faithfully} represents the underlying shape.
Moreover, the set of parameterized diagrams is \emph{faithful} if and only if it can be used
to \emph{reconstruct} the underlying shape.

The foundation for asking the question of how many diagrams are needed
for a faithful representation
was first introduced
in~\cite{turner2014persistent}, where Turner et al.\ defined the Persistent Homology
Transform (PHT) and the Euler characteristic curve transform (ECCT). These
transforms map a simplicial complex embedded in~$\R^{d}$ (a shape) to sets of
persistence diagrams (respectively, Euler characteristic curves) 
parmeterized by $\S^{d-1}$, the set of all directions in~$\R^d$. 
They showed that, up to mild general position assumptions, no two
simplicial complexes can correspond to the same PHT or ECCT, i.e., 
they showed that the uncountably infinite sets making up the PHT and ECCT
faithfully represent the~shape.

However, the PHT and ECCT are uncountably
infinite sets of diagrams (and Euler characteristic curves).  Thus, to bridge the
gap between the theory and what can be used in practice, a discretization of the
PHT was needed; several papers stepped up to the challenge and proved that
there exists a finite discretizations of topological transforms that are faithful for
simplicial and cubical
complexes~\cite{belton2018learning,belton2019reconstructing,betthauser2018topological,%
ghrist2018euler, curry2018directions,micka2020searching, fasy2022faithful}.
Beyond proving the existence of these finite faithful sets, Belton et
al.~\cite{belton2019reconstructing}
explicitly give an algorithm for using an oracle to reconstruct graphs embedded
in $\R^d$ with $n$ vertices. 
Their reconstruction uses~$n^2 - n + d + 1$ augmented persistence diagrams
in~$O(dn^{d+1}+ n^4 + (d + n^2)\oracleTime)$ time~\cite[Theorem~17]{belton2019reconstructing},
where~$\oracleTimeFull$ is the time complexity it
takes for an oracle to produce answer a persistence diagram
querry. Since the
direction-labeled set of diagrams can be used to reconstruct the underlying
graph, the set is a
faithful discretization of the augmented PHT~(APHT). 

In the current work, we give a faithful discretization of the APHT using
$\fullDGMCompFull$ diagrams and~$\fullTimeFull$ time.
This result is an
improvement over~\cite{belton2019reconstructing}, both in the size of the set and
in the speed of reconstruction.
The crux is in the improvmeent to edge reconstruction. While the method of \cite{belton2019reconstructing}
uses a linear scan of all possible edges for each vertex, resulting in a
quadratic number of diagrams needed, here we show that we can detect edges
with $\edgeDGMCompFull$ diagrams using
a radial binary multi-search.

\section{Background and Tools}\label{sec:tools}
In this section, we provide definitions of the tools used in the remainder of the
paper. We make use of standard notation such as using $e_i$ for the $i$th
standard basis vector in $\R^d$, where $1 \leq i \leq d$.
We use the notation~$(V, E)$ for a graph and its vertex and edge sets, and use $n = |V|$ and $m = |E|$.
We assume the reader is familar with standard concepts in topology (such as
simplicial
complexes, simplicial homology, Betti numbers), and note that further information can
be found in~\cite{edelsbrunner2010computational} for a general introduction and
in \cite[Section 2.1]{fasy2022faithful} for a detailed definition of the
augmented persistence diagram.

First, we define our general postition assumption.

\begin{assumption}\label{ass:general}
    Let $V \subset \R^d$ be a finite set with $d \geq 2$. We say $V$ is in
    \emph{general position} if the following properties are satisfied:
    \begin{enumerate}[(i)]
        \item Every set of $d+1$ points is affinely
            independent.\label{assitm:affine}
        \item No three points are colinear after orthogonal projection
            into the space~$\pi(\R^d)$,
            where $\pi: \R^d \to \R^2$ is the orthogonal projection onto
            the plane spanned by the first two basis elements, $e_1$ and~$e_2$.
            \label{assitm:projectedindep}
        \item Every point has a unique height with respect to the direction
            $e_2$.
            \label{assitm:unique}
    \end{enumerate}
    We call a graph \emph{GP-immersed}\footnote{Here, we use \emph{immersed}
    rather than \emph{embedded} in order to allow intersections of edges.  Note,
    however,
    that this can only happen when $d=2$.} iff its vertex set is in general
    position in~$\R^d$.
\end{assumption}

We note that \ref{assitm:unique} is not strictly necessary, however, it is
convenient for simplicity of exposition.  How to handle this degeneracy is discussed in
\appendref{basis}.

Given a graph GP-immersed in $\R^d$, we can filter the graph based on the height
in any direction $\dir$ in the sphere of directions $\S^{d-1}$.  To do so, we
assign each simplex a height.  A vertex $v \in V$ is assigned the height~$s
\cdot v$, and an edge $[v_0,v_1] \in E$ is assigned the height $\max \{s\cdot
v_0, s \cdot v_1\}$.  This function, mapping vertices and edges to heights, is
known as a \emph{filter} function, which we use to compute persistent homology.

\subsection{Persistence and the Oracle Framework}

Given a filtered simplicial complex (that is, a simplicial complex with each
simplex assigned a ``height''), the corresponding
\emph{augmented persistence diagram~(APD)} is a record of
of all homological events throughout the filtration.
A \emph{birth event} is the introduction or appearance of a new homological feature,
and a \emph{death event} is the merging of two features.  A death is paired with
the most recent of the birth-labeled features that it merges together,
creating a birth-death
pairing, leaving the remaining feature labeled by the elder birth height.
In an APD,
every simplex corresponds to exactly one event (resulting in some pairings where
the birth and death heights are equal).
As a result,
by construction, APDs contain at least one event at the height of each vertex.

For a simplicial complex (e.g., a graph) GP-immersed in $\R^d$ and a
direction in $S^{d-1}$, we use the \emph{lower-star filtration}:
the nested sequence of  graphs that arise by
looking at all simplices at or below a given height and allowing that height to
grow from $-\infty$ to~$\infty$. Throughout this paper, we denote
the~$i$-dimensional APD by~$\dgm{i}{\dir}$ and write $\dgm{}{\dir}=\sqcup_{i}
\dgm{i}{\dir} $, omitting the graph from the notation (as it is always
clear from~context).\footnote{
When calculating diagrams, we count $\dgm{}{\dir}$ as one diagram, not
multiple.}

The \emph{(augmented) persistence homology transform ((A)PHT)} is the set of
(augmented)
diagrams of lower-star filtrations in all possible directions, parameterized by
direction.
That is, the set $X=\{(\dgm{}{\dir}, \dir)\}_{\dir \in \sph^{d-1}}$. A
\emph{faithful discretization} is a finite subset of $X$
from which all other elements of $X$ can be deduced (and, by
\cite{turner2014persistent}, corresponds to a
unique simplicial complex).
The introduction of the (augmented) persistent homology transform has
sparked related research in applications of shape
comparison~\cite{giusti2015clique, lee2017quantifying, rizvi2017single,
lawson2019persistent, tymochko2020using,
wang2019statistical,singh2007topological,bendich2016persistent}.
As such, finding a minimal faithful discretization is important for the
applicability of the (A)PHT.
In what follows, we will only consider APDs, and we may shorten notation and
refer to an APD by the word \emph{diagram}.

In this work, we assume an oracle framework.  That is, we assume that we have no
knowledge of the shape itself, but we have access to an oracle from which
we can query directional diagrams.

\begin{definition}[Oracle]
\label{def:oracle}
    For a graph $(V,E)$ GP-immersed in $\R^d$ and
    a direction~$\dir \in \sph^{d-1}$,
    the operation~$\oracle{\dir}{}$ returns the diagram~$\dgm{}{\dir}$.
    We define~$\oracleTimeFull$ to be the time complexity of this oracle query and
    note that the space complexity of $\dgm{}{\dir}$ is~$\Theta(n+m)$.
    We assume that the data structure returned by the oracle allows queries for
    specific birth or death values in~$\Theta(\log n)$ time (for example, the
    we could have two arrays of persistences points, one sorted by birth values
    and one sorted by death values).
\end{definition}

\subsection{Constructions and Data Structures}
\label{ss:constructions}
In this subsection, we introduce the edge arc object and other definitions
useful for computing properties of immersed graphs.
Throughout this paper, we project points in $\R^d$ to the~$(e_1,e_2)$-plane.
As a result, we use
``above (below)'' without stating with respect to which direction
as shorthand for ``above (below) with respect to the
direction~$e_2$.''  This direction is intentionally chosen (and used in our GP
assumption), as it corresponds to
our intuition of above (below) in the figures.
When we measure an angle of a vector $x$, denoted~$\measuredangle
x$, we mean the angle
that $\pi(x)$ makes with the positive $e_1$ axis.

Given a direction $\dir$ and a vertex in a graph immersed in~$\R^d$, we classify
each edge $(v,v')$ as either an ``incoming'' edge, when $v'$ is below $v$ with respect to $\dir$,
or an ``outgoing'' edge, when $v'$ is
above $v$ with respect to $\dir$. Note
that all incoming edges have the same height as the vertex with respect to the
$e_2$ direction.

\begin{definition}[Indegree]\label{def:indegree}
    Let $(V,E)$ be a graph GP-immersed in $\R^d$. Let $v \in V$ and $\dir \in
    \S^{d-1}$. The
    \emph{indegree of~$v$ in direction $\dir$}, denoted $\indeg{v}{\dir}$, is
    the number of edges incident to $v$ with height $\dir \cdot v$.
\end{definition}

The following lemma relates
the number of edges at a given height to points in the APD.
\begin{lemma}[Edge Count]\label{lem:count}
    Let $(V,E)$ be a graph and let $c \in \R$.
    Let~$f \colon V\sqcup E \to \R$ be a filter function.
    Then, the edges in $E$ with a function value of
    $c$ are in one-to-one
    correspondence with the following multiset of points in $\dgm{}{f}$, the diagram
    corresponding to $f$:
    \begin{align}
        \begin{split}
            \big\{ (b,d) &\in \dgm{1}{f} \text{ s.t. } b = c \big\} \label{eqn:isimps} \\
            &\cup \big\{ (b,d) \in \dgm{0}{f} \text{ s.t. } d = c \big\}.
        \end{split}
    \end{align}
\end{lemma}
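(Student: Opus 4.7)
The plan is to deduce the bijection directly from the simplex-to-event correspondence that defines the augmented persistence diagram, combined with the fact that graphs have no simplices of dimension greater than one.

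First, I would invoke the defining property of the APD recalled in the paragraph on persistence: each simplex $\sigma$ contributes exactly one event point, and the dimension of the event is determined by $\dim \sigma$. Concretely, a $k$-simplex added at height $c$ either creates a new $k$-dimensional homology class (a birth in $\dgm{k}{f}$ at value $c$) or destroys a $(k-1)$-dimensional class (a death in $\dgm{k-1}{f}$ at value $c$). Specializing to $k=1$, every edge $e$ with $f(e)=c$ must contribute either a point $(c,d)\in\dgm{1}{f}$ (an $H_1$ birth) or a point $(b,c)\in\dgm{0}{f}$ (an $H_0$ death), and in either case the paired height equal to $c$ is exactly the one named in the lemma's two sets. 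This gives a well-defined map from the edge set at height $c$ into the claimed multiset.

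Next, I would show that this map is a bijection. Injectivity follows because distinct simplices contribute distinct event points in the APD. For surjectivity I would argue by exhaustion: any point $(b,d)\in\dgm{1}{f}$ with $b=c$ must arise from a $1$-simplex creating a class (since the graph contains no $2$-simplices to create $H_1$ births via boundary operations in higher dimensions, and $0$-simplices cannot create $H_1$); any point $(b,d)\in\dgm{0}{f}$ with $d=c$ must arise from a $1$-simplex merging components, since vertices only contribute $H_0$ births and there are no other simplex types available. Thus every element of the claimed multiset is the image of a unique edge at height $c$.

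I do not expect a serious obstacle here; the only subtlety is making sure the argument is insensitive to the choice of pairing when many simplices share the height~$c$. I would resolve this by emphasizing that the multiset of event points at height $c$ (rather than any particular pairing) is what the lemma counts, so the elder rule's tie-breaking choices redistribute the partner values $b$ and $d$ among the same set of points without changing which coordinates equal $c$. With that observation, the simplex-event correspondence of the APD yields the bijection immediately.
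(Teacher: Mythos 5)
Your proof is correct and is essentially the argument the paper has in mind: the paper itself does not re-prove this lemma but defers to \cite[Appendix~A]{fasy2022faithful} and summarizes the idea in exactly the terms you use (each edge is either an $H_1$ birth or an $H_0$ death under the simplex--event pairing). One small correction to your surjectivity reasoning: the fact that points of $\dgm{1}{f}$ have their birth attributed to $1$-simplices does not depend on the absence of $2$-simplices --- a $2$-simplex would contribute an $H_2$ birth or an $H_1$ \emph{death}, never an $H_1$ birth --- it follows directly from the general rule that a $k$-simplex either creates a $k$-cycle or destroys a $(k-1)$-cycle; the absence of $2$-simplices is only needed to conclude that every point of $\dgm{1}{f}$ has death value $\infty$, which is not used here.
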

In other words, each edge corresponds to either a birth of a one-dimensional
homological
feature or a death of a zero-dimensional feature in $\dgm{}{f}$.
For more details and a generalized proof, see~\cite[Appendix~A]{fasy2022faithful}.

If $f$ is a lower-star filtration in direction $\dir \in \S^{d-1}$, we
note that whenever a vertex $v$ has a unique height with respect to a direction
$\dir$, the cardinality of the multiset above is exactly~$\indeg{v}{\dir}$.

\begin{lemma}[Indegree Computation]\label{lem:indegree}
    Let~$(V,E)$~be a graph GP-immersed in~$\R^d$. Let~$v \in V$
    and let~$\dir \in \sph^{d-1}$ such that $\dir \cdot v \neq \dir \cdot v'$
    for any~$v' \neq v \in V$. Then,~$\indeg{v}{\dir}$ can be computed via the
    oracle using
    \zeroindegDGMComp diagram and~$\zeroindegTimeFull$ time.
\end{lemma}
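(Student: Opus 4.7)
The plan is to combine the hypothesis that $v$ has a unique height in direction $\dir$ with the Edge Count lemma to translate the indegree into a counting problem over a single diagram.

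First, I would query the oracle once to obtain $\dgm{}{\dir} = \oracle{\dir}{}$, which is the APD of the lower-star filtration of $(V,E)$ in direction $\dir$. Next, I would apply \cref{lem:count} with the filter function $f$ equal to the lower-star filter in direction $\dir$ and the constant $c = \dir \cdot v$: this produces a bijection between the multiset of diagram points
\[
\{(b,d)\in\dgm{1}{\dir} : b = \dir\cdot v\} \;\cup\; \{(b,d)\in\dgm{0}{\dir} : d = \dir\cdot v\}
\]
and the edges of $E$ whose filter value equals $\dir\cdot v$. The height of an edge $[u,w]$ in the lower-star filtration is $\max\{\dir\cdot u,\dir\cdot w\}$. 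The uniqueness hypothesis on $v$ guarantees that no other vertex has height $\dir\cdot v$, so an edge attains height $\dir\cdot v$ if and only if it is incident to $v$ and its other endpoint lies strictly below $v$; these are exactly the incoming edges counted by $\indeg{v}{\dir}$.

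It then remains to count the relevant diagram points efficiently. Using the data structure guaranteed by \cref{def:oracle}, I can locate the range of points with birth equal to $\dir\cdot v$ in $\dgm{1}{\dir}$ (and analogously, death equal to $\dir\cdot v$ in $\dgm{0}{\dir}$) in $\Theta(\log n)$ time each, and then read out the $\indeg{v}{\dir}$ matching entries sequentially. The total cost is therefore one oracle call plus $O(\log n + \indeg{v}{\dir})$ bookkeeping, which matches the claimed $\zeroindegTimeFull$ bound.

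The only subtle point, and the step I would be most careful about, is the translation from ``edges with filter value $\dir\cdot v$'' to ``edges contributing to $\indeg{v}{\dir}$.'' This is where \ref{assitm:unique}-style uniqueness (lifted here to the direction $\dir$ by hypothesis) is essential: without it, an edge between two other vertices could coincidentally attain height $\dir\cdot v$ and inflate the count. Once this is dispatched, the rest of the proof is a direct application of \cref{lem:count} and the oracle's query-time guarantee.
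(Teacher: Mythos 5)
Your proposal follows essentially the same route as the paper: query the oracle once, apply the Edge Count lemma at the level $c = \dir\cdot v$, use the uniqueness of $v$'s height in direction $\dir$ to conclude that every edge at height $c$ is incident to $v$ (so the multiset counts exactly $\indeg{v}{\dir}$), and then use the sorted structure of the returned diagram to extract the count.

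There is one small slip in your complexity accounting. You propose to ``read out the $\indeg{v}{\dir}$ matching entries sequentially,'' which introduces an additive $\indeg{v}{\dir}$ term in the bookkeeping cost. But the lemma only asks for the \emph{count}, not the list of matching points, and an array sorted by birth (resp.\ death) values supports counting a contiguous range by two binary searches and an index subtraction, with no dependence on the size of the range. Since $\indeg{v}{\dir}$ can be as large as $n-1$, your $O(\log n + \indeg{v}{\dir})$ estimate does not obviously match the claimed time bound; the paper instead bounds the search step by $\Theta(\log n + \log m)$ and then applies $m = O(n)$. Replace the sequential read-out with a difference of indices and the argument is airtight.
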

\begin{proof}
    Let~$\adgm=\oracle{\dir}{}$.
    By the assumption on $s$,
    the height of $v$ with respect to the direction $\dir$ is unique.
    Hence, we
    know that any edge at height~$c=\dir \cdot v$ must be incident to $v$.
    Thus, by the definition of  indegree, an edge
    has the height~$c$ if and only if it contributes to the indegree of $v$
    in direction~$\dir$.
    Using \lemref{count}, we count these edges by counting
    one-dimensional births and zero-dimensional deaths at
    height~$c$. Since $\adgm_0$ and~$\adgm_1$ are sorted by
    both birth and death values (see \defref{oracle}) and
    since $\adgm$ has~$\Theta(n+m)$ points, searching for these
    events takes~$\Theta(\log n + \log m)$.  Adding $\oracleTimeFull$ for the
    oracle query and recalling that $m=O(n)$, the total runtime
    is~$\zeroindegTimeFull$.
\end{proof}

\begin{table}[t!b!h]
    \caption{Attributes of the edge arc object.}\label{tab:edgearc}
    \centering
    \begin{tabular}{|r|p{0.3\textwidth}|}
	\hline
        \rowcolor{lightgray}
        $\eAVar$ & \textbf{Edge Arc}\\
	\hline
        $\thevert$     &   Vertex around which the edge arc is centered\\
	\hline
        $(\alpha_1,\alpha_2)$     &   Start and stop angles of the arc,
		with respect to the $e_1$ direction\\
	\hline
        $\vV$   &   Array of vertices in arc radially ordered clockwise in
            $(e_1, e_2)$-plane\\
	\hline
        $\eC$   &   Number of edges incident to $v$ within the arc\\
	\hline
    \end{tabular}
\end{table}
We conclude this section by introducing a data structure, the \emph{edge arc object}; see
\tabref{edgearc} for a summary of the attributes of an edge arc and \figref{edgealg-dsex} for an example.
An edge arc represents the region in the~$(e_1, e_2)$-plane centered at $v$ that
is
swept out between the two angles~$\alpha_1$ and~$\alpha_2$ (the word `arc' is
referring to the arc of angles between $\alpha_1$ and $\alpha_2$, where the
angle is measured with respect to the postive $e_1$ axis).
We only consider edge arcs in the upper
half-space, with respect to the $e_2$ direction,
so the maximal edge arc is the upper half-plane
and the start and stop angles always satisfy~$0 \leq \alpha_1 \leq \alpha_2 \leq \pi$.
An edge arc stores an array of vertices sorted radially clockwise about $\pi(v)$
in the $(e_1,e_2)$-plane in decreasing
angle with the $e_1$-direction.
By construction, the first vertex in the array must be closest to $\alpha_2$ and
the last closest
to~$\alpha_1$. The edge arc also stores the count of edges of~$E$ that have vertices
from~$\vV$ as endpoints.
In implementation, angles~$\alpha_1$ and $\alpha_2$
do not need to be stored directly, but we include them in psuedocode and
discussions for~clarity.

\begin{figure}[tb]
    \center
    \includegraphics[width=.26\textwidth]{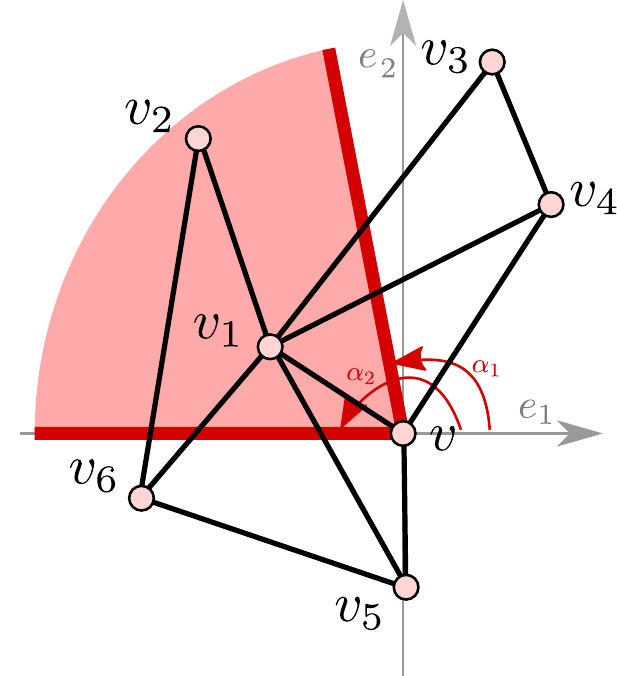}
    \caption
    {An edge arc $\eAVar$ centered at vertex~\mbox{$\eAVar.\thevert=v$}.  Other
    attributes of the edge arc include its start and stop angles,
    $\eAVar.\alpha_1 = 1.75$ radians and $\eAVar.\alpha_2 = \pi \approx 3.14$
    radians, the array of vertices~$\eAVar.\vV = \{v_1, v_2\}$,
    and the count of edges~$\eAVar.\eC=1$. Here, we also see that~$\indeg{v}{e_1}=1$
    and~$\indeg{v}{e_2}=2$.
    }\label{fig:edgealg-dsex}
\end{figure}

Given some arc $\eAVar$ centered at vertex $v \in V$, we need to
be able to compute $\eAVar.\eC$, the number of edges
contained $\eAVar$ that are adjacent to $v$.
The following lemma provides such a computation.
We omit a proof because it is a straightforward adaptation of \cite[Theorem
$16$]{fasy2022faithful} and~\cite[Lemma $13$]{belton2019reconstructing}.

\begin{lemma}[Arc Count]\label{lem:arccount}
    Let $(V,E)$ be a graph GP-immersed in $\R^d$. Let $\eAVar$ be an edge arc
    object, and let~$v=\eAVar.\thevert$. Let $\dir \in \sph^{d-1}$ be the direction
    perpendicular to~$\alpha_2$ so that the arc is entirely below
    $\dir \cdot v$. Let~$E_*$ denote the edges with
    height $\dir \cdot v$ that are not contained in $\eAVar$.
    If no vertex in $V$ is at the same height as~$v$ in direction $\dir$,
    then
    \begin{align*}
        \eAVar.\eC = \indeg{v}{\dir} - |E_*|.
    \end{align*}
\end{lemma}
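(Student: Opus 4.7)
The plan is to show that the indegree of $v$ in direction $\dir$ splits cleanly into the edges lying inside $\eAVar$ and those lying outside, so that subtracting $|E_*|$ isolates $\eAVar.\eC$.

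First, I would characterize the edges counted by $\indeg{v}{\dir}$. By \defref{indegree}, this counts edges incident to $v$ whose height equals $\dir\cdot v$. Since no other vertex shares this height (by hypothesis), and since the height of an edge is the max of its endpoints' heights, an edge $[v,w]$ contributes to $\indeg{v}{\dir}$ if and only if $\dir\cdot w < \dir\cdot v$, i.e., $w$ lies strictly below $v$ with respect to $\dir$.

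Second, I would observe that every edge incident to $v$ with endpoint in $\eAVar$ automatically satisfies this condition. The direction $\dir$ is perpendicular to $\alpha_2$ and oriented so that the arc lies below the level $\dir\cdot v$; hence every vertex in $\eAVar.\vV$ has $\dir$-height strictly less than $\dir\cdot v$. Combined with the previous step, this shows that the $\eAVar.\eC$ edges stored in the arc are a subset of the edges counted by $\indeg{v}{\dir}$.

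Finally, the edges contributing to $\indeg{v}{\dir}$ partition into two disjoint sets: those whose other endpoint is in $\eAVar.\vV$ (contributing $\eAVar.\eC$) and those whose other endpoint lies outside the arc (contributing $|E_*|$ by definition). Summing and rearranging gives the identity $\eAVar.\eC = \indeg{v}{\dir} - |E_*|$. The only subtle point — and the place I would be most careful — is confirming that no edge is miscounted because its other endpoint happens to lie on the boundary ray at angle $\alpha_2$; this is ruled out by the general position assumption \ref{assitm:projectedindep} together with the hypothesis that no vertex shares $\dir$-height with $v$, so the arc boundary contains no vertex other than $v$ itself.
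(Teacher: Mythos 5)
The paper omits a proof of this lemma entirely, stating that ``it is a straightforward adaptation of \cite[Theorem~16]{fasy2022faithful} and \cite[Lemma~13]{belton2019reconstructing}.'' Your argument supplies the missing details directly and correctly: you characterize $\indeg{v}{\dir}$ as counting precisely the edges $[v,w]$ with $\dir\cdot w < \dir\cdot v$ (using the hypothesis that $v$'s $\dir$-height is unique so that every edge at height $\dir\cdot v$ is incident to $v$, and general position to make the inequality strict), you observe that the ``arc entirely below $\dir\cdot v$'' hypothesis places all of $\eAVar.\vV$ in the counted set, and you split the counted edges into the in-arc family (of size $\eAVar.\eC$) and the out-of-arc family (which is $E_*$ by its definition). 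That disjoint decomposition gives the identity by subtraction. This is the natural, elementary route and it is the argument the references in question make; there is no competing ``approach'' in the paper to contrast with, so your proof simply fills a gap the authors chose not to write out. Your closing remark about boundary rays is slightly more caution than is needed --- the hypothesis that no vertex shares $v$'s $\dir$-height already rules out endpoints on the level set $\{x : \dir\cdot x = \dir\cdot v\}$, which is what matters for the indegree count --- but it does no harm.
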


As an illustration, again, consider \figref{edgealg-dsex}.  Consider the
direction $\dir = e^{i(\alpha_2-\pi/2)}$, which is perpendicular
to~$e^{i\alpha}$ by construction.  In addition, the edge arc is below~$\dir \cdot
v$ (specifically, all vertices in $\eAVar.\vV$ are below~$\dir \cdot v$).
Then, $ \indeg{v}{\dir}=2$ and $E_*=\{ [v,v_5]\}$.  By \lemref{arccount},~$\eAVar.\eC =
\indeg{v}{\dir} - |E_*|=2-1=1$.  When we say that a list of vertices or edges is
sorted clockwise around $v$, we mean that the list is sorted clockwise (cw)
around $\pi(v)$ once projected into the $(e_1,e_2)$-plane with the largest angle
first.

\section{Fast Reconstruction}\label{sec:edges}
In this section, we provide an algorithm to reconstruct
a graph (and, more generally, a one-skeleton of a simplicial complex) using the
oracle.  We start
with an algorithm to find the edges, provided the
vertex locations are known.
We end with describing the complete graph reconstruction method.

\subsection{Fast Edge Reconstruction}\label{sec:fastedge}

In this subsection, we assume we have a graph $(V,E)$, where the vertex set~$V$ is known, but
$E$ is unknown.  Using the oracle and the known vertex locations, we provide a
reconstruction algorithm to find all edges in~$E$ (\algref{find-edges}).  This
algorithm is a sweepline algorithm in direction~$e_2$ that, for each vertex
processed in the sweep, performs a radial binary multi-search
(\algref{up-edges}).  This search is enabled by an algorithm that splits an edge
arc object into two edge arcs, each containing half of the vertices
(\algref{split-arc}).  We provide the algorithms and relevant theorem statements
here, but defer the proofs to \appendref{proofs}.


\begin{algorithm}[tbh]\caption{$\splitArc{\eAVar}{\varbigedges}{\theta}$}\label{alg:split-arc}
    \begin{algorithmic}[1]
        \REQUIRE
            $\eAVar$, an edge arc;
            $\varbigedges$, an array of all edges~$(\eAVar.\thevert,v') \in E$ such
            that~$\measuredangle \pi(v'-\eAVar.\thevert) < \eAVar.\alpha_1$;
            $\theta$, the
            minimum angle defined by any three vertices in $\pi(V)$
        \ENSURE $\eAVar_{\ell}$ and $\eAVar_r$, edge arcs satisfying
            the properties in \thmref{split-arc}
        \STATE $n_v \gets |\eAVar.\vV|$\label{algln:split:numverts}
        \STATE $mid \gets \lceil \frac{n_v}{2} \rceil$
            \label{algln:split:calcmid}
        \STATE $\alpha \gets
            \measuredangle \pi(\eAVar.\vV[mid]-\eAVar.\thevert) - \theta/2$
            \label{algln:split:alpha}
        \STATE $\dir \gets e^{i(\alpha - \frac{\pi}{2})}$\label{algln:split:dir}
        \STATE $m_{\ell} \gets \indeg{\eAVar.\thevert}{\dir} - | \{ b \in bigedges ~|~
            \measuredangle b < \pi+\alpha \}|$\label{algln:split:countell}
        \STATE $m_{r} \gets \eAVar.\eC-m_{\ell}$\label{algln:split:countr}
        \STATE $\eAVar_{\ell} \gets$ edge arc
            where~$\eAVar_{\ell}.\thevert = \eAVar.\thevert$,
            $\eAVar_{\ell}.\alpha_1=\alpha$,
            $\eAVar_{\ell}.\alpha_2=\eAVar.\alpha_2$,
            $\eAVar_{\ell}.\vV=\eAVar.\vV[:mid]$,
            and~$\eAVar_{\ell}.\eC= m_{\ell}$
            \label{algln:split:ell}
        \STATE $\eAVar_{r} \gets$ edge arc
            where~$\eAVar_{r}.\thevert = \eAVar.\thevert$,
            $\eAVar_{r}.\alpha_1=\eAVar.\alpha_1$,
            $\eAVar_{r}.\alpha_2=\alpha$,
            $\eAVar_{r}.\vV=\eAVar.\vV[mid+1:]$,
            and~$\eAVar_{r}.\eC=m_r$
            \label{algln:split:r}
        \RETURN $(\eAVar_{\ell}$, $\eAVar_{r})$
    \end{algorithmic}
\end{algorithm}
%
In \algref{split-arc}, we find a direction~$\dir$
in the~$(e_1,e_2)$-plane such that half of the
vertices in~$\eAVar.\vV$ are above~$v$ and half are below~$v$ with respect to
the direction~$\dir$. This allows us to create a new edge arcs corresponding to
each half; see \figref{EA}.
The properties of
\algref{split-arc} are described in the
following~theorem.

\begin{figure}[tb]
    \center
    \includegraphics[width=.39\textwidth]{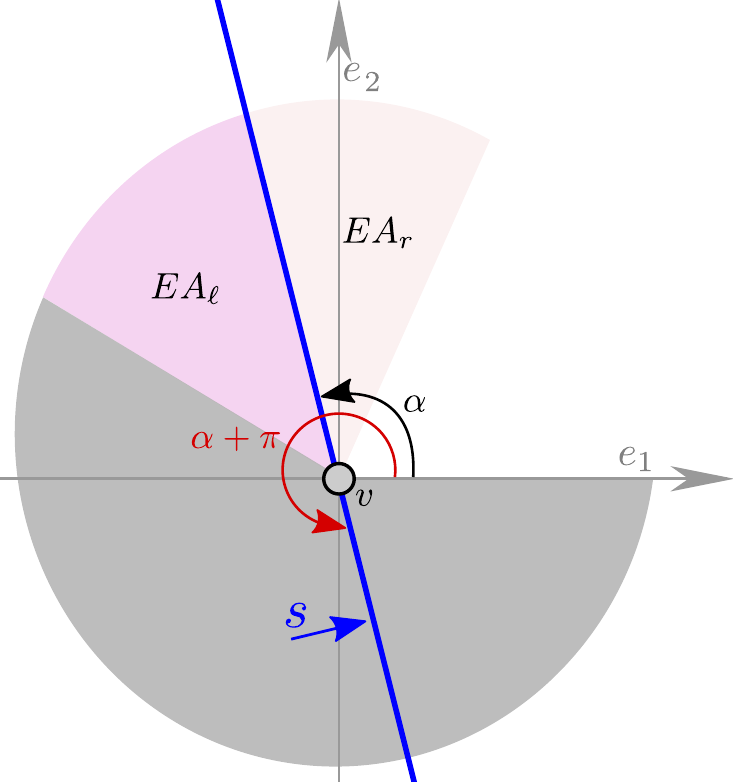}
    \caption
    {The splitting of edge arc $\eAVar$ into~$\eAVar_{\ell}$
    and~$\eAVar_r$, as in \algref{split-arc}. The large gray region
    is the region containing all edges of
    $bigedges$.  That is, all edges whose angle with the positive $e_1$-axis is
    at least~$\eAVar.\alpha_1$. On \alglnref{split:countell} of the algorithm,
    we compute the number of edges in
    $\eAVar_{\ell}$ by first computing the indegree of $\eAVar.\thevert$ in
    direction $s$ from the diagram in direction $s$, then we subtract the number
    of edges in
    $bigedges$ that are below the height of $\eAVar.\thevert$ in
    direction~$\dir$ (i.e., below the blue line).
    By the pigeonhole principal, we find~$\eAVar_r.\eC = \eAVar.\eC-\eAVar_{\ell}.\eC$.} \label{fig:EA}
\end{figure}

\begin{restatable}[Arc Splitting]{theorem}{splitarc}\label{thm:split-arc}
    \algref{split-arc} uses \splitDGMComp
    diagram and $\splitTimeFull$ time to split~$\eAVar$ into two new edge
    arcs~$\eAVar_{\ell}$ and~$\eAVar_r$ with the properties:
    \begin{enumerate}[(i)]
        \item \label{stmt:split-concat}
            The sets $\eAVar_r.\vV$ and $\eAVar_{\ell}.\vV$ partition
            the vertex set~$\eAVar.\vV$ such that the vertices
            in~$\eAVar_{\ell}.\vV$ come before those
            in~$\eAVar_r.\vV$, with respect to the clockwise ordering around
            $\eAVar.\thevert$.

        \item
            $|\eAVar_{\ell}.\vV| = \lceil{\frac{1}{2}|\eAVar.\vV|}\rceil$.
            \label{stmt:split-size-ell}

        \item
            $|\eAVar_{r}.\vV| = \lfloor{\frac{1}{2}|\eAVar.\vV|}\rfloor$.
            \label{stmt:split-size-r}
    \end{enumerate}
\end{restatable}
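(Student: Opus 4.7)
The plan is to verify properties (i)--(iii) and to confirm that the returned objects are well-formed edge arcs (i.e.\ have correct $\eC$ counts), then to bound the resource use.

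First I would show that the cut angle $\alpha$ chosen on \alglnref{split:alpha} strictly separates the two halves of $\eAVar.\vV$. The array $\eAVar.\vV$ is sorted clockwise around $\pi(v)$, so by decreasing $\measuredangle$. Because $\theta$ lower-bounds the angle subtended at any vertex of any triangle formed by three vertices of $\pi(V)$, consecutive entries of $\eAVar.\vV$ differ in angle (as seen from $v = \eAVar.\thevert$) by at least $\theta$. Subtracting $\theta/2$ from the angle of the middle entry therefore places $\alpha$ strictly between $\measuredangle \pi(\eAVar.\vV[mid] - v)$ and $\measuredangle \pi(\eAVar.\vV[mid+1] - v)$. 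Properties (i)--(iii) then follow from the slicing: the prefix $\eAVar.\vV[:mid]$ lies in $(\alpha, \eAVar.\alpha_2]$ and the suffix $\eAVar.\vV[mid+1:]$ in $[\eAVar.\alpha_1, \alpha)$, giving the correct partition in clockwise order, and with $mid = \lceil n_v/2 \rceil$ the prefix has length $\lceil n_v/2 \rceil$ while the suffix has length $\lfloor n_v/2 \rfloor$.

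Next I would verify the edge counts. \lemref{indegree} yields $\indeg{v}{\dir}$ for $\dir = e^{i(\alpha - \pi/2)}$ in a single oracle query. By construction the indegree counts exactly the edges $(v, v')$ with $\dir \cdot v' \le \dir \cdot v$: the edges whose projected directions lie in the half-plane opposite $\dir$. This half-plane decomposes into the edges of $\eAVar_\ell$ (its vertices are guaranteed to lie below $\dir \cdot v$ by the choice of $\alpha$, invoking \lemref{arccount}) together with the subset of $bigedges$ that also lies below $\dir \cdot v$ -- exactly the set filtered as $\{ b \in bigedges : \measuredangle b < \pi + \alpha \}$ on \alglnref{split:countell}. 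Subtracting that count from the indegree leaves $\eAVar_\ell.\eC$, and a pigeonhole on $\eAVar.\eC$ then yields $\eAVar_r.\eC = \eAVar.\eC - \eAVar_\ell.\eC$ on \alglnref{split:countr}. The resource bound is clean: the only oracle query is the one inside \lemref{indegree}, giving \splitDGMComp diagram and $\zeroindegTimeFull$ time for the indegree; the bigedges filter adds $O(|bigedges|)$ and the two slicings $O(|\eAVar.\vV|)$, summing to $\splitTimeFull$.

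The hard part will be the angular bookkeeping in the second step: one must verify that the edges counted by $\indeg{v}{\dir}$ equal the union of $\eAVar_\ell$'s edges and the filtered portion of $bigedges$, with no overlap and no omitted edge. This is essentially a case analysis on $\measuredangle \pi(v'-v)$ relative to the lines through $v$ at angles $\alpha_1$, $\alpha$, $\alpha_2$ and their antipodes. Once that correspondence is pinned down, the remaining claims are syntactic consequences of the slicing definitions.
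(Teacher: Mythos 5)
Your argument for properties (i)--(iii) matches the paper's: both use the definition of $\theta$ to place $\alpha$ strictly between the angles of $\eAVar.\vV[mid]$ and $\eAVar.\vV[mid+1]$, and both read off the sizes from $mid = \lceil n_v/2 \rceil$. The edge-count argument via $\indeg{v}{\dir}$ minus the filtered part of $\varbigedges$ is also the paper's approach (it is formalized there through \lemref{arccount}).

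However, your resource analysis has a genuine gap. You assert the $\varbigedges$ filter costs $O(|\varbigedges|)$ and the two slicings of $\eAVar.\vV$ cost $O(|\eAVar.\vV|)$, then claim this sums to $\splitTimeFull$. It does not: $|\varbigedges|$ can be $\Theta(m)$ and $|\eAVar.\vV|$ can be $\Theta(n)$, which dwarfs the $\Theta(d + \oracleTime + \log n)$-type bound that $\splitTimeFull$ denotes. The paper achieves the stated bound only because (a) $\varbigedges$ is already radially sorted, so the count $|\{b \in \varbigedges : \measuredangle b < \pi + \alpha\}|$ is a binary search costing $\Theta(\log|\varbigedges|)$, and (b) $\eAVar_\ell.\vV$ and $\eAVar_r.\vV$ are not materialized as fresh arrays but stored as pointer ranges into a globally shared cw-sorted vertex array, costing $\Theta(\log|\eAVar.\vV|)$ rather than $\Theta(|\eAVar.\vV|)$. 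This matters downstream: \algref{split-arc} is called $\Theta(m\log n)$ times in \algref{find-edges}, so a per-call overhead of $\Theta(n)$ would inflate the overall runtime to $\Theta(mn\log n)$. You also flag the half-plane case analysis (that the edges counted by $\indeg{v}{\dir}$ decompose disjointly into $\eAVar_\ell$'s edges and the filtered $\varbigedges$) as ``the hard part'' without carrying it out; the paper discharges this cleanly by invoking \lemref{arccount}, which you should cite explicitly rather than re-derive by ad hoc angular bookkeeping.
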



\begin{algorithm}[tbh]
    \caption{$\findUpEdges{v}{V_v}{in_v}{\theta}{\adgm}$ }
      \label{alg:up-edges}
    \begin{algorithmic}[1]
        \REQUIRE $v\in V$;
            $V_v$, array of all  vertices in $V$ above $v$,
            ordered clockwise;
            $in_v$, array of all incoming edges of~$v$,
            sorted radially clockwise;
            $\theta$, the minimum angle
            formed by any three vertices in~$\pi(V)$;
            and $\adgm$, the APD in direction~$e_2$
        \ENSURE array of all outgoing edges of $v$
        \STATE $\varindeg \gets$ indegree of $v$ in direction $-e_2$.
            \label{algln:up-edges:indeg}
        \STATE $\eAStack \leftarrow$ a stack of edge arc objects,
            initialized with a single edge arc $A$, where $A.\thevert=v$,
            $A.\alpha_1=0$,~$A.\alpha_2=\pi$,
            $A.\eC= \varindeg$, and $A.\vV=V_v$\label{algln:up-edges:stack}
        \STATE $E_v \gets \emptyset$\label{algln:up-edges:initEv}
        \WHILE{$\eAStack$ is not empty}\label{algln:up-edges:loop}
            \STATE $\eAVar\leftarrow \eAStack.pop()$\label{algln:up-edges:pop}
            \IF{$\eAVar.\eC = 0$} \label{algln:up-edges:zero-edges}
                \STATE \textbf{Continue} to top of while loop
				\label{algln:up-edges:zero-count}
            \ELSIF{$\eAVar.\eC = |\eAVar.\vV|$ }\label{algln:up-edges:one-arc}\label{algln:up-edges:zero-edges-end}
                \STATE Append $v \times \eAVar.\vV$ to $\eV$, in order
                    \label{algln:up-edges:addAll}
                \STATE \textbf{Continue} to top of while loop
		\label{algln:up-edges:found-edge}
            \ELSE\label{algln:up-edges:endone-arc}
                \STATE $(\eAVar_{\ell},\eAVar_r) \leftarrow
                    \splitArc{\eAVar}{in_v \cup \eV}{\theta}$\label{algln:up-edges:splitMain}
                \STATE Push $\eAVar_{r}$ onto $\eAStack$\label{algln:up-edges:push1}
                \STATE Push $\eAVar_{\ell}$ onto
		    $\eAStack$\label{algln:up-edges:push2}
            \ENDIF
        \ENDWHILE\label{algln:up-edges:loopend}
        \RETURN $E_v$
            \label{algln:up-edges:return}
    \end{algorithmic}
\end{algorithm}

In \algref{up-edges}, we use \algref{split-arc} to find all outgoing edges from
a given vertex.  In particular, the
algorithm maintains a stack of edge arc objects. When processing an edge arc
(the while loop in \alglnrefRange{up-edges:loop}{up-edges:loopend}), we are
determining which of the vertices in $\vV$ form edges with~$\thevert$.
If an edge arc has
$\eC=0$, it contains no edges, and it can be ignored
(\alglnrefRange{up-edges:zero-edges}{up-edges:zero-edges-end}). If it
has $\eC$ exactly equal to the number of vertices in $\vV$, each vertex in $\vV$ must form
an edge with $v$
(\alglnrefRange{up-edges:one-arc}{up-edges:endone-arc}).  Otherwise, as
demonstrated in \figref{edgeAlg}, the edge arc is split in half using
\algref{split-arc} and each half is put on the stack to be processed.

\begin{figure*}[tbh]
    \center
    \begin{subfigure}{.33\textwidth}
        \centering
        \includegraphics[width=.8\textwidth]{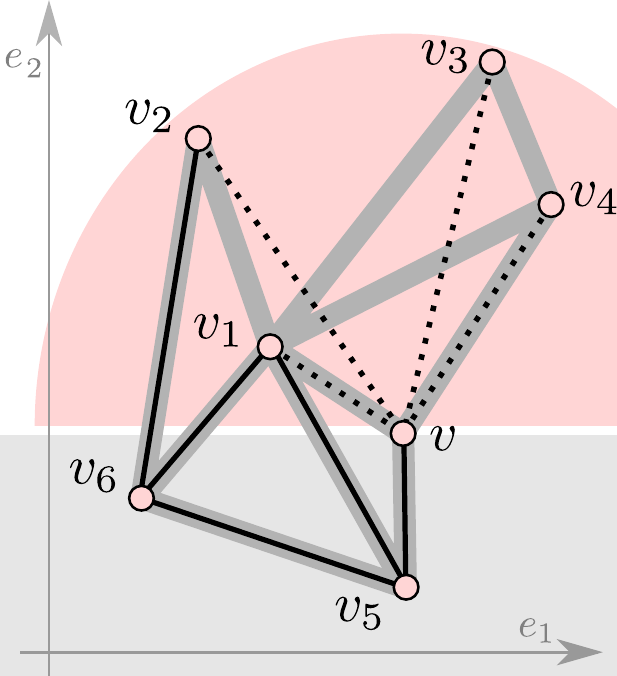}
        \caption{At vertex $v$.}
        \label{fig:edgeAlg-start}
    \end{subfigure}%
    \begin{subfigure}{.33\textwidth}
        \centering
        \includegraphics[width=.8\textwidth]{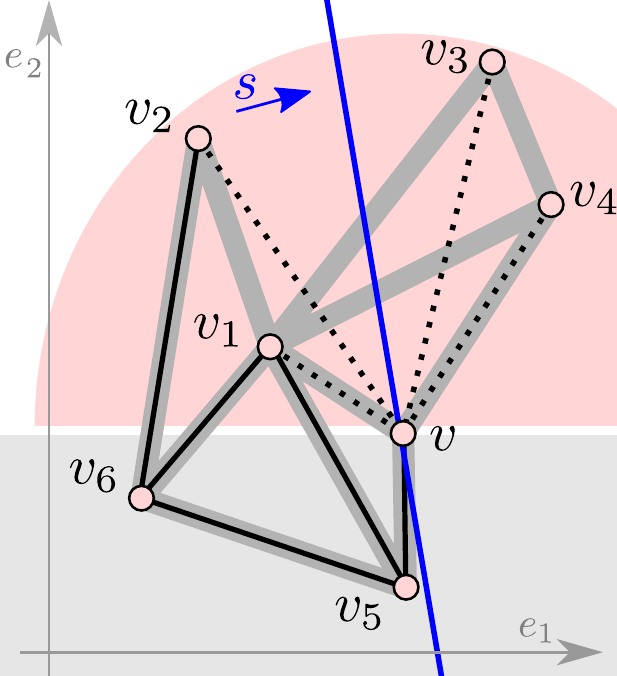}
        \caption{Split edge arc.}
        \label{fig:edgeAlg-ell}
    \end{subfigure}
    \begin{subfigure}{.33\textwidth}
        \centering
        \includegraphics[width=.8\textwidth]{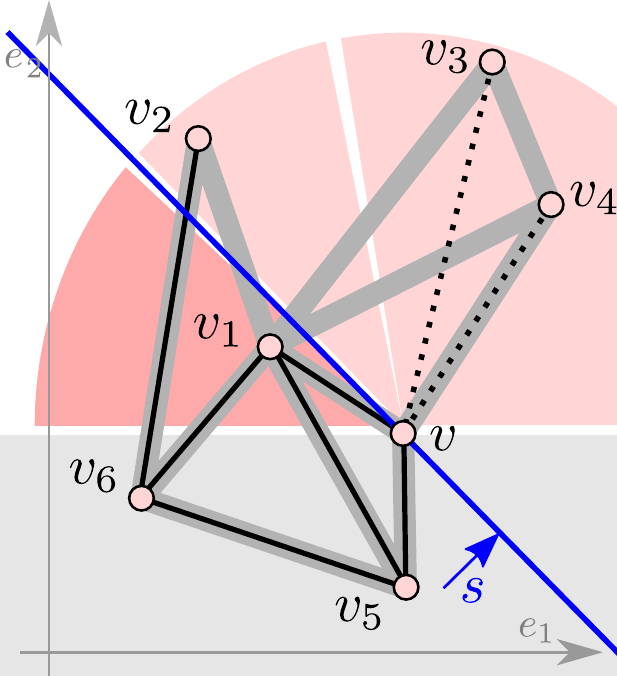}
        \caption{Split again.}
        \label{fig:edgeAlg-next}
    \end{subfigure}
    \caption[One step of \algref{up-edges}]
    {We demonstrate one step of \algref{up-edges}.  (\subref{fig:edgeAlg-start})
    By assumption, we initially know $[v_5,v]\in E$. From
    \alglnref{up-edges:indeg} of \algref{up-edges} we also know that two of the
    four vertices above $v$ are adjacent to $v$.  Thus, we create an edge arc
    object $\eAVar$ with $\eAVar.\eC=2$, and~$\eAVar.\vV=(v_1,v_2,v_3,v_4)$.
    (\subref{fig:edgeAlg-ell}) In \algref{split-arc}, we choose a direction
    $\dir$ such that half of the vertices in $\eAVar$ are below $v$. We use this
    split to create two edge arcs, $\eAVar_{r}$ and $\eAVar_\ell$, corresponding
    to the pink shaded regions on the right and left of the blue line defined by
    $\dir$.  We push~$\eAVar_{r}$ onto a stack to be processed later and focus
    on the arc $\eAVar_{\ell}$. Since two edges contribute to $v$'s indegree in
    direction~$\dir$ and one is the known edge~$[v_5,v]$, we have
    $\eAVar.\eC=2-1=1$. (\subref{fig:edgeAlg-next}) Next, we find a new
    direction~$\dir$ that splits~$\eAVar_{\ell}.\vV$ into two sets of size one.
    We push the set above $\dir$ onto our stack. The edge arc containing only
    $v_1$ also has $\eAVar.\eC=2-1=1$, so $[v_1, v] \in E$. After all steps of
    \algref{up-edges} are applied to find the edges above a particular vertex,
    \algref{up-edges} is then applied to the next highest vertex, eventually
    processing every vertex in $V$ in a sweep (\algref{find-edges}).

    }\label{fig:edgeAlg}
\end{figure*}

\begin{restatable}[Finding Edges Above a Vertex]{theorem}{upedge}\label{thm:up-edges}
    \algref{up-edges} finds the sorted array of edges above $v$
    using~$\upDGMCompFull$
    augmented persistence diagrams in~$\upTimeFull$~time.
\end{restatable}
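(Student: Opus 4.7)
The plan is to prove three things about \algref{up-edges}: that the returned array contains exactly the outgoing edges of $v$; that these edges are sorted clockwise around $v$; and that the stated diagram and time bounds hold.

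I would first establish the loop invariant that for every edge arc $\eAVar$ currently in $\eAStack$, $\eAVar.\eC$ equals the true number of edges in $E$ from $v$ to vertices in $\eAVar.\vV$, and every outgoing edge of $v$ is accounted for either by some arc on the stack or by an entry of $E_v$. At initialization, the single arc contains all vertices above $v$, and its edge count is set to the indegree of $v$ in direction $-e_2$; applying \lemref{indegree} in direction $-e_2$ (so that edges incoming to $v$ in direction $-e_2$ are precisely the outgoing edges in direction $e_2$) shows this equals the outdegree of $v$, and the invariant holds. When an arc is split via \algref{split-arc}, \thmref{split-arc}(i) partitions the vertex set while respecting the clockwise order, and the counts $\eAVar_\ell.\eC$ and $\eAVar_r.\eC$ computed inside \algref{split-arc} correctly split $\eAVar.\eC$ because the $bigedges$ argument $in_v \cup E_v$ supplies exactly those edges incident to $v$ that are already known to lie outside the arc. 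The terminal branches now follow from the invariant: $\eAVar.\eC = 0$ implies no vertex of $\eAVar.\vV$ is adjacent to $v$, so skipping is correct; $\eAVar.\eC = |\eAVar.\vV|$ implies every vertex of $\eAVar.\vV$ is adjacent to $v$, so appending them all to $E_v$ is correct.

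For sortedness of the output, I would argue by induction on the number of pops. Because $\eAVar_\ell$ is pushed after $\eAVar_r$ on \alglnref{up-edges:push2}, the LIFO discipline processes $\eAVar_\ell$'s subtree first; combined with \thmref{split-arc}(i), which places $\eAVar_\ell.\vV$ before $\eAVar_r.\vV$ in the clockwise order, this yields a left-first traversal of the implicit binary tree that appends edges to $E_v$ in clockwise order.

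For the complexity bound, I would view the execution as a binary tree whose internal nodes are the \algref{split-arc} calls and whose leaves are terminal arcs. Each split costs \splitDGMComp\ diagram and $\splitTimeFull$ time by \thmref{split-arc}, so the overall cost is controlled by the number of internal nodes. Let $k$ denote the outdegree of $v$ and set $n_v = |V_v|$. By \thmref{split-arc}(ii)--(iii), each split halves the vertex set, so the tree has depth $O(\log n_v)$. A standard binary multi-search argument then bounds the internal node count by $O(k \log(n_v/k) + k)$: only arcs with $0 < \eAVar.\eC < |\eAVar.\vV|$ are split, and at any depth $d$ at most $\min(2^d, k)$ such arcs can exist, so summing over $O(\log n_v)$ levels yields the bound. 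Adding the single diagram used for the initial indegree on \alglnref{up-edges:indeg} gives the claimed $\upDGMCompFull$ diagrams and $\upTimeFull$ time. The main obstacle will be making the binary multi-search accounting precise: the key is that arcs with zero edges are pruned immediately rather than split, so every internal node must contain at least one outgoing edge, and it is this fact—together with the halving property of \algref{split-arc}—that separates the tight $O(k \log(n_v/k))$ estimate from the naive $O(n_v)$ bound one would get by splitting every nonempty arc blindly.
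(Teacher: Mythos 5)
Your proof matches the paper's approach: a loop invariant for partial correctness (the paper spells it out in four parts---containment of every outgoing edge in the stack or in $E_v$, clockwise ordering of $E_v$, non-overlap and clockwise ordering of the stack arcs, and the angular interleaving between $E_v$ and the stack), a LIFO/left-first-traversal argument using \thmref{split-arc}(i) for clockwise sortedness of the output, and a binary-tree accounting of \algref{split-arc} calls for the cost, where your $O(k\log(n_v/k)+k)$ internal-node estimate is in fact a slightly tighter bound than the paper explicitly invokes. One small slip: the initial indegree on \alglnref{up-edges:indeg} is read from the supplied diagram $\adgm$ in $\Theta(\log n)$ time and does not cost a fresh oracle query, so you should not add a diagram for it; this does not change the asymptotic count.
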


Finally, our main algorithm (\algref{find-edges}) is a sweepline algorithm, where we
consider the vertices in increasing order of their $e_2$-coordinates and find
the outgoing edges of the vertex being considered.
\begin{algorithm}[tbh]
    \caption{$\findEdges{V}$}
    \label{alg:find-edges}
    \begin{algorithmic}[1]
        \REQUIRE $V$, array of all vertices in the unknown graph
        \ENSURE $E$, array of all edges in the unknown graph
        \STATE $\adgm \gets \oracle{-e_2}{}$\label{algln:find-edges:getdgm}
        \STATE $E \gets \{\}$
        \STATE $\varcyclic \gets$ for each $v \in V$, an array clockwise ordering all
            vertices in $V$ that are above $v$\label{algln:find-edges:cyclic}
            \label{algln:find-edges:sort}
        \STATE $\theta \gets$ min angle defined by any three vertices of
            $\pi(V)$\label{algln:find-edges:theta}
        \FOR {$v$ in $V$, in increasing height in direction $e_2$}
          \label{algln:find-edges:callAlg2Begin}
        \STATE $\varincoming \gets$ clockwise sorted array of edges in $E$ incident to $v$
        \label{algln:find-edges:down-edges}
        \STATE $E += \findUpEdges{v}{\varcyclic[v]}{\varincoming}{\theta}{\adgm}$\label{algln:find-edges:up-edges}
        \label{algln:find-edges:callup-edges}
        \ENDFOR \label{algln:find-edges:callAlg2End}
        \RETURN $E$
    \end{algorithmic}
\end{algorithm}

\begin{restatable}[Edge Reconstruction]{theorem}{edgerec}\label{thm:edgerec}\label{thm:edges}
    Let $(V,E)$ be a graph GP-immersed in $\R^d$.
    Given~$V$, \algref{find-edges}
    reconstructs~$E$ using $\edgeDGMCompFull$ augmented persistence
    diagrams in~$\edgeTimeFull$~time.
\end{restatable}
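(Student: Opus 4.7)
The plan is to reduce the proof to \thmref{up-edges} (applied once per vertex), handling correctness by induction along the sweep and bounding the remaining preprocessing work.

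\textbf{Correctness.} I would proceed by induction on the sweep order of \alglnref{find-edges:callAlg2Begin}. By GP assumption~\ref{assitm:unique}, the $e_2$-heights of vertices are distinct, so the sweep order is well defined and each edge of $E$ has a unique \emph{lower} endpoint with respect to $e_2$. Hence every edge is sought exactly once, namely as an outgoing edge of its lower endpoint. When $v$ is processed, every edge $(u,v)$ with $u$ below $v$ was discovered (as an outgoing edge of $u$) during a previous iteration, so $\varincoming$ on \alglnref{find-edges:down-edges} is exactly the sorted list of incoming edges of $v$ required by the input contract of \thmref{up-edges}. That theorem then certifies that \algref{up-edges} correctly returns the outgoing edges of $v$, and aggregating across the loop reconstructs $E$ exactly.

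\textbf{Complexity.} The oracle is queried exactly once, on \alglnref{find-edges:getdgm}; the resulting diagram is then reused in every iteration (it is passed in as $\adgm$). Each of the $n$ iterations invokes \algref{up-edges}, which by \thmref{up-edges} costs $\upDGMCompFull$ diagrams and $\upTimeFull$ time. Summing over $v$ gives $\edgeDGMCompFull$ diagrams in total and a running-time contribution matching $\edgeTimeFull$. The remaining work is preprocessing: the clockwise orderings on \alglnref{find-edges:cyclic} and the minimum angle $\theta$ on \alglnref{find-edges:theta} can both be obtained in $O(n^2 \log n)$ time by radially sorting around each vertex and scanning consecutive angular gaps, while $\varincoming$ can be maintained in sorted order across iterations by inserting each newly-discovered edge into its upper endpoint's sorted list in $O(\log n)$ per insertion. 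These costs are absorbed by the dominant per-vertex term from \algref{up-edges}.

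\textbf{Main obstacle.} The hard conceptual work has been offloaded to \thmref{up-edges}, so the burden here is primarily bookkeeping: verifying that $\varincoming$ is populated with exactly the incoming edges of $v$ at the right moment, and that the preprocessing fits within the claimed time bound. The one point requiring care is confirming that the total cost of maintaining sorted incoming lists and computing $\theta$ does not dominate $\edgeTimeFull$; I expect this to follow from standard sorting bounds.
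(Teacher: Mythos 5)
Your overall structure matches the paper's: induction along the sweep for correctness, amortizing the per-vertex calls to \algref{up-edges}, and accounting separately for preprocessing. The correctness argument is essentially the same as the paper's loop invariant (all edges with maximum endpoint height at most the current vertex height are known), and your treatment of $\varincoming$ is fine.

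The gap is in the preprocessing bound. You propose computing the clockwise orderings $\varcyclic[v]$ for all $v$ by radially sorting around each vertex, which costs $O(n^2\log n)$, and assert this is ``absorbed by the dominant per-vertex term.'' That absorption does not hold in general: the stated time bound $\edgeTimeFull$ contains a $\Theta(n^2)$ term and a $\Theta(m\log n \cdot (\splitTime))$ term, and when $m$ is small (say $m=O(n)$) the dominant non-oracle term is $\Theta(n^2)$, which is strictly smaller than $\Theta(n^2\log n)$. The paper avoids this by invoking Lemmas 1 and 2 of Verma (2011), which simultaneously compute the angular orderings around all $n$ vertices in $\Theta(n^2)$ total time rather than $\Theta(n\log n)$ per vertex; the minimum angle $\theta$ is then obtained in $\Theta(n+m)$ by scanning consecutive entries of these precomputed orderings. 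To close the gap you would need to replace your naive per-vertex radial sort with this (or an equivalent) $\Theta(n^2)$ all-pairs angular ordering technique. Your $O(\log n)$-per-insertion maintenance of sorted incoming lists contributes $O(m\log n)$ overall and genuinely is absorbed, so that part is fine.
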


\subsection{Putting it Together: Full Reconstruction}

The results of \secref{fastedge} are related to just part of
the full process of reconstruction, since reconstruction begins with no
knowledge of the underlying simplicial complex.  Identifying the location of all
vertices is the first step, and is one that has been previously
examined in detail. In particular, Belton et al.\
provide an algorithm
to reconstruct~$V$ in~$\vertTimeFull$ time and~$d+1$ oracle queries, assuming
stricter general position assumptions; see \cite[Algorithm 1 \& Theorem
9]{belton2019reconstructing}. 
An appropriate vertex reconstruction method taken together with \thmref{edges},
gives us the following for a full reconstruction~process.

\begin{theorem}[Graph Reconstruction]\label{thm:full}
    Suppose we use a vertex reconstruction method that requires
    $\vertTimeFullGeneral$ time and $\vertDGMFullGeneral$ diagrams. Then we can
    reconstruct an unknown graph immersed in~$\R^d$ using $\fullDGMCompFull$
    diagrams in~$\fullTimeFull$~time.
\end{theorem}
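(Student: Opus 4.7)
The plan is to prove \thmref{full} by straightforward composition of the vertex reconstruction subroutine with the edge reconstruction algorithm, then summing the resource costs. First, I would invoke the assumed vertex reconstruction method on the oracle to obtain the set $V$; by hypothesis this takes $\vertTimeFullGeneral$ time and $\vertDGMFullGeneral$ diagram queries. Once $V$ is in hand, I would feed it into \algref{find-edges} to recover $E$. By \thmref{edges}, this second stage uses $\edgeDGMCompFull$ diagrams and runs in $\edgeTimeFull$ time. Concatenating the two phases produces a complete reconstruction of the GP-immersed graph $(V,E)$.

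It remains to collect the overall complexities. The total number of diagram queries is $\vertDGMFullGeneral + \edgeDGMCompFull$, which I would argue collapses into the claimed bound $\fullDGMCompFull$ by absorbing the lower-order vertex term into the dominant edge term. Similarly, the total running time is $\vertTimeFullGeneral + \edgeTimeFull$, which simplifies to $\fullTimeFull$ once one notes that the edge phase dominates the arithmetic work, and that the oracle calls of both phases contribute additively to the $\oracleTimeFull$ terms. No further bookkeeping is needed because the two phases are executed sequentially and share no intermediate data structures whose construction is not already accounted for in \thmref{edges}.

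The main obstacle, such as it is, is purely bookkeeping: verifying that the vertex-phase cost truly is subsumed by the edge-phase cost (so that the stated closed forms $\fullDGMCompFull$ and $\fullTimeFull$ are correct), and ensuring that the general position assumptions required by \algref{find-edges} (namely \assref{general}) are compatible with whatever general-position hypotheses the plugged-in vertex reconstruction method requires. Since the theorem's statement allows any vertex reconstruction method meeting the given resource bounds, I would simply note that the composition is valid provided the input graph satisfies the stronger of the two sets of assumptions, which is the setting of \assref{general} together with whatever the vertex method needs. With these remarks, the proof reduces to a one-line invocation of the two prior results.
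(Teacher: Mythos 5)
Your proposal is correct and matches the paper's approach exactly: the paper explicitly omits a proof, noting that \thmref{full} ``simply combines the results of vertex reconstruction methods \dots and \thmref{edges},'' which is precisely the sequential composition and cost-summing you describe. The only small imprecision is your remark about the vertex-phase cost being ``absorbed'' into the edge-phase cost --- the stated bounds in fact retain both contributions additively rather than subsuming one in the other --- but this does not affect the validity of the argument.
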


We omit a proof of \thmref{full}, as it simply combines the results of vertex
reconstruction methods (such as
\cite[Theorem 9]{belton2019reconstructing}) and \thmref{edges} of the current paper.
Note that, taking the stricter general position assumptions
of~\cite{belton2019reconstructing}, \thmref{full} gives us a runtime of
$\Theta(\vertTime + n^2 + m\log n(\splitTime))$ with~$\Theta(d + m \log
n)$ oracle queries.  

Observing that the methods presented here are immediately
applicable in the reconstruction of one-skeletons of general simplicial
complexes, we have the following corollary:

\begin{corollary}[One-Skeleton Reconstruction]
    Let~$K$ be an unknown simplicial complex GP-immersed in~$\R^d$.
    A vertex reconstruction algorithm that requires
    $\vertTimeFullGeneral$ time and $\vertDGMFullGeneral$ diagrams taken
    together with \algref{find-edges} of the current paper
    reconstruct the one-skeleton of
    $K$ using $\fullDGMCompFull$ augmented persistence diagrams
    in~$\fullTimeFull$~time.
\end{corollary}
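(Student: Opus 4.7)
The plan is to reduce the corollary to two already-established ingredients: a vertex reconstruction subroutine, and \algref{find-edges}. Since the corollary claims the same complexity bounds as \thmref{full}, the only new content to verify is that the edge-finding machinery of \secref{fastedge}, stated and proved for graphs, continues to work when the underlying object is the one-skeleton of a simplicial complex $K$ of arbitrary dimension. Because the algorithm never queries anything beyond the APD returned by the oracle, this reduces to checking that \lemref{count} (and hence \lemref{indegree}, \lemref{arccount}, \thmref{split-arc}, \thmref{up-edges}, and \thmref{edges}) remains valid when $K$ contains simplices of dimension $\geq 2$.

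The key observation I would emphasize is a standard dimensional decoupling property of lower-star persistence: a $k$-simplex is inserted at a single height and is paired by the persistence algorithm with either a $k$-dimensional birth event or a $(k-1)$-dimensional death event. Consequently, a 1-simplex at height $c$ contributes exactly one point to the multiset
$\{(b,d)\in \dgm{1}{f} : b=c\} \cup \{(b,d)\in \dgm{0}{f} : d=c\}$
appearing in \lemref{count}, whereas a 0-simplex contributes only to a 0-dimensional birth, and a $k$-simplex with $k\geq 2$ contributes only to events in dimensions $\geq 1$ other than a 1-dimensional birth. Hence the one-to-one correspondence in \lemref{count} is preserved verbatim in the simplicial-complex setting. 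The lemmas and theorems downstream use only this correspondence together with the $\Theta(\log n)$ birth/death query time guaranteed by \defref{oracle}, so they also carry over unchanged.

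Having justified the edge-finding subroutine for one-skeletons, the proof concludes along the same lines as the omitted argument for \thmref{full}: first run the chosen vertex-reconstruction algorithm to recover $V$, at a cost of $\vertDGMFullGeneral$ diagrams and $\vertTimeFullGeneral$ time; then invoke \algref{find-edges} on $V$ to recover the edge set, at the cost given by \thmref{edges}; and sum these contributions to obtain the advertised $\fullDGMCompFull$ diagrams and $\fullTimeFull$ time bounds. The only real obstacle is the dimensional decoupling argument for \lemref{count}, and since this is a standard consequence of how lower-star persistence pairs simplices, the extension from graphs to one-skeletons goes through immediately.
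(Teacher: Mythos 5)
Your proof is correct, and it supplies exactly the detail the paper elides: the paper simply asserts (without proof) that the graph methods are ``immediately applicable'' to one-skeletons, and your dimensional-decoupling argument is the right justification for that claim. The key observation---that a $k$-simplex is paired with either a $k$-dimensional birth or a $(k-1)$-dimensional death, so only $1$-simplices can contribute to the multiset $\{(b,d)\in\dgm{1}{f}:b=c\}\cup\{(b,d)\in\dgm{0}{f}:d=c\}$ of \lemref{count}---is precisely what makes \lemref{indegree}, \lemref{arccount}, and the downstream theorems robust to the presence of higher-dimensional simplices in the complex whose APD the oracle returns. One small point worth making explicit: when the oracle is queried on the full complex $K$ rather than on its one-skeleton, the $0$-dimensional diagram is identical (connectivity depends only on the one-skeleton) and the $1$-dimensional \emph{birth} heights are identical (they are determined by edges); only the $1$-dimensional death heights change, and those are never consulted by \lemref{count}. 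With that noted, your conclusion---combining an arbitrary vertex-reconstruction subroutine with \algref{find-edges} and summing the costs---matches \thmref{full} exactly.
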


Finally, we note that embedding a graph (or simplicial complex) in $\R^2$ is a
special case, as~$m=O(n)$ and~$d$
is constant. This speeds up edge reconstruction as well as potentially, vertex
reconstruction. For instance, with the stricter general position assumptions,
the method of
\cite[Theorem 6]{belton2019reconstructing} gives a vertex
reconstruction of a graph embedded in~$\R^2$ that uses three diagrams and
$\Theta(n\log n + \Pi)$ time.  Broadening our view again to general vertex
reconstruction algorithms, we obtain a result for plane graph
reconstruction:

\begin{corollary}[Reconstruction in $\R^2$]
    Suppose we use a vertex reconstruction method that requires
    $\vertTimeFullGeneral$ time and $\vertDGMFullGeneral$ diagrams.
    We can use an
    oracle to reconstruct the one-skeleton of an unknown simplicial complex
    embedded in $\R^2$
    using~$O(\vertDGMGeneral + n \log n)$ diagrams 
    and~$O(\vertTimeGeneral + n^2 + n \log n(d + \oracleTime))$~time. 
\end{corollary}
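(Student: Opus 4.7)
The plan is to derive this corollary by specializing Theorem~\ref{thm:full} to the planar setting and invoking the standard Euler-formula bound on edges of a planar graph. There is no new algorithmic content; all the work is a careful substitution.

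First, I would observe that since the simplicial complex is embedded (and not merely immersed) in $\R^2$, its one-skeleton is a planar graph. By Euler's formula for planar graphs, we have $m \leq 3n-6$, so $m = O(n)$. This is the key structural fact that distinguishes the $\R^2$ case from a general $\R^d$ embedding and makes the edge count collapse from a potentially quadratic quantity into a linear one.

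Next, I would apply Theorem~\ref{thm:full} directly: using the given vertex reconstruction method together with \algref{find-edges}, we can reconstruct the one-skeleton using $\fullDGMCompFull$ diagrams in $\fullTimeFull$ time. Following the remark immediately after Theorem~\ref{thm:full}, the edge reconstruction contribution resolves to $\Theta(m \log n)$ diagrams beyond those used by the vertex reconstruction, and $\Theta(n^2 + m \log n \cdot \splitTime)$ time beyond the vertex reconstruction cost, where the split-arc cost absorbs a factor of $d + \oracleTime$ by \thmref{split-arc}. Substituting $m = O(n)$ into both the diagram count and the runtime gives the claimed bounds $O(\vertDGMGeneral + n \log n)$ diagrams and $O(\vertTimeGeneral + n^2 + n \log n(d + \oracleTime))$ time.

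The only subtle point, and the one I would be most careful about, is confirming that the edge-reconstruction analysis in \thmref{edges} is tight enough that replacing $m$ with $O(n)$ produces exactly the stated expression (rather than, say, hiding a residual $m$ term in the $n^2$ factor). Since the $n^2$ term in the runtime arises from presorting the $n$ above-vertex arrays in \alglnref{find-edges:cyclic} and computing the minimum projected angle $\theta$ in \alglnref{find-edges:theta}, and these are independent of $m$, the substitution is safe. Thus the corollary follows immediately from \thmref{full}, the planarity bound $m = O(n)$, and the given vertex reconstruction hypothesis.
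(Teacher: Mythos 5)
Your proof is correct and matches the paper's own (implicit) reasoning: the paper presents this corollary immediately after remarking that embedding in $\R^2$ yields $m = O(n)$, and the result then follows by substituting this bound into \thmref{full}. Your additional note that the $n^2$ term in the runtime comes from the presorting steps (\alglnref{find-edges:cyclic} and \alglnref{find-edges:theta}) and is therefore unaffected by the substitution is a correct and worthwhile sanity check, but the overall argument is the same as the paper's.
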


\section{Discussion}
One way of proving that a
discretization of the APHT is faithful is through the method of \emph{reconstructing} the
underlying simplicial complex. That is, by showing
that the underlying simplicial complex can be recovered with the data of the
discretization alone.  In this paper, we take that approach and provide an
algorithm for reconstructing a graph immersed in $\R^d$. We use fewer
persistence diagrams than presented in alternate approaches.
For example,
the algorithm that we present for edge reconstruction (when the vertex locations
are known) uses~$\edgeDGMCompFull$ diagrams.
In contrast, \cite[Theorem 16]{belton2019reconstructing} uses $n^2 - n$
diagrams. Note that, for a very dense edge set, that is, when~$m = \Theta(n^2)$,
the method in~\cite[Theorem 16]{belton2019reconstructing} uses fewer diagrams.
However, if $m=O(n)$,  as is common in many complexes, the represntation
computed in this paper has fewer diagrams.  Moreover, we emphasize that the
number of diagrams is not exponential in the ambient dimension. 

One might hope to use binary search strategies to reconstruct a
simplicial complexe, but the methods presented here are unique to
one-skeletons. Radially ordering higher dimensional simplices is not well-defined, and
this issue prevents the
methods presented here from
being immediately transferrable.
On the other hand, with the
representation in this paper being output-sensitive (as opposed to testing if
every pair of vertices is a simplex), we have hope for the discretization of the
(A)PHT of a simplicial complex immersed in~$\R^d$ being proportional to the size of the
complex itself.

We also observe that not all diagrams used in our reconstruction algorithms were
strictly necessary (i.e., the set of diagrams used were not a minimal faithful
set). One straightforward way to reduce the number of diagrams used without
altering the method much would be to split the region above a vertex in the
sweep into arcs that contain exactly the same number of edges as vertices, or no
edges. This property can then be validated by a simple difference of indegrees.
In ongoing work, we hope to make these claims precise.
We also hope to extend our methods to use topological descriptors that
are \emph{not} dimension-returning (such as augmented Euler Characteristic
curves).

\small
\bibliographystyle{abbrv}
\bibliography{references}

\newpage
\appendix
\section{Algorithmic Proofs}\label{append:proofs}

In this appendix, we provide
the proofs omitted from \secref{edges}. These proofs provide justification for
the runtimes, diagram compelxity, and correctness of the
algorithms presented in this paper.

\subsection{Proof of \thmref{split-arc}}\label{append:split-arc-proof}

\splitarc*

\begin{proof}
    For the runtime, we walk through the algorithm and
    analyze the time and diagram complexity of each line.
    In \alglnrefRange{split:numverts}{split:alpha}, we find the angle
    $\alpha$ that splits $\eAVar.\vV$ into two equal sets,
    then in \alglnref{split:dir} compute a direction~$\dir$
    orthogonal to $\alpha$.
    See \figref{EA}.
    \alglnrefRange{split:numverts}{split:dir} use no
    diagrams and can be done in constant time when restricting our attention
    to the~$(e_1,e_2)$-plane.
    However, we need $\dir$ to be a
    direction in $\R^d$ (as opposed to only in the $(e_1,e_2)$-plane),
    so the computation takes~$\Theta(d)$ time.\footnote{With some clever data
    structures, this $\Theta(d)$ can be reduced to constant time.  For example,
    we could require vectors in $\R^2$ are automatically padded with $0$'s to
    become vectors in $\R^d$ when needed. However, this is out of the scope of
    the real RAM model of computation.}
    Specifically, $\dir$ is the vector
    \begin{align}\label{eqn:computeSplitDir}
        \dir &=e^{\frac{1}{2}i(2\alpha-\pi-\theta)} \\
        &= \Bigg( \cos{\left(\alpha-\frac{1}{2}\pi-\frac{1}{2}\theta\right)},
        \sin{\left(\alpha-\frac{1}{2}\pi-\frac{1}{2}\theta\right)}, 0, 0, \ldots,
        0 \Bigg) .\nonumber
    \end{align}
    To compute $m_{\ell}$ in \alglnref{split:countell}, we compute
    $\indeg{v}{\dir}$ then subtract the cardinality of the set~$S:=\{ b \in bigedges ~|~
    \measuredangle b < \pi+\alpha \}$, where $\measuredangle b$ is taken to mean
    the angle $b$ makes with the $e_1$-axis, when viewed as
    a vector with $\eAVar.\thevert$ as the origin.
    By \lemref{indegree}, we compute $\indeg{v}{\dir}$ via the oracle using \zeroindegDGMComp
    diagram and $\zeroindegTimeFull$ time.
    Since~$\varbigedges$ is sorted and since~$s$ lies in the $(e_1,e_2)$-plane,
    we can find the set $S$ in $\Theta(\log(|\varbigedges|))$ time.
    The subtraction in \alglnref{split:countell}
    takes constant time, as does \alglnref{split:countr}.

    In \alglnrefTwo{split:ell}{split:r}, we create two edge arc objects.  The
    time complexity of creating them is proportional to the size of the obejcts
    themselves.  All attributes of edge arc objects,
    except the array of vertices ($\vV$), are constant
    size.  By construction, $\eAVar_{\ell}.\vV$ and $\eAVar_r.\vV$
    split~$\eAVar.\vV$ into two sets, which can be done na\"ively
    in~$\Theta(d|\eAVar.\vV |)$ time by walking through~$\eAVar.\vV$ and storing
    each one explicitly.  However, we improve this to~$\Theta(\log|\eAVar.\vV
    |)$~time if we
    have a globally accessible array of vertices (sorted cw around~$v$) and just
    computes the
    pointers to the beginning and end of the sub-arrays corresponding to the
    $\vV$ attributes of the new edge arc objects.
    In total, \algref{split-arc}  and
    takes~$\Theta(d + \zeroindegTime+ \log(|\varbigedges|) + 1
    + \log(|\eAVar.\vV|)) = \splitTimeFull$ time and uses
    uses \splitDGMComp diagram.

    Now that we have walked through the algorithm and established the runtime
    and diagram complexity, we prove correctness. To do so, we first show that
    $\eAVar_{\ell}$ and $\eAVar_r$ are edge arc objects. In particular, this
    means showing that they have the correct values for $\eC$ and $\vV$.
    We prove this for $\eAVar_{\ell}$; the proof for $\eAVar_r$ follows a
    similar argument.

    \emph{$\eAVar_{\ell}.\eC$:}
    We must show that $\eAVar_{\ell}.\eC$ is the number of edges in
    $\eAVar_{\ell}$ incident to $\eAVar_{\ell}.\thevert$.
    By \lemref{count}, the value returned from~$\indeg{\eAVar.\thevert}{\dir}$ counts
    all edges incident to~$\eAVar.\thevert$ and below $\dir \cdot
    \eAVar.\thevert$ in direction~$\dir$. By
    \lemref{arccount}, this is
    exactly the total number of edges in
    $\eAVar_{\ell}$ plus edges~$(\eAVar.\thevert,v')\in \eV$ for which $\dir
    \cdot v'
    < \dir \cdot \eAVar.\thevert$.
    Thus, by subtracting $| \{ b \in bigedges ~|~ \measuredangle b < \pi+\alpha
    \}|$ from~$\indeg{\eAVar.\thevert}{\dir}$ on \alglnref{split:countell}, we
    are left with $m_\ell$, the number of edges incident to $\eAVar.\thevert$
    contained in $\eAVar_{\ell}$.  Setting~$\eAVar_{\ell}.\eC= m_{\ell}$ on
    \alglnref{split:ell}, we see that $\eAVar_{\ell}.\eC$ is correct.

    \emph{$\eAVar_{\ell}.\vV$:}
    We must show that $\eAVar_{\ell}.\vV$ contains an array of all verices
    contained in $\eAVar_{\ell}$ radially ordered clockwise. This follows from
    the fact that $\eAVar.\vV$ is all vertices contained in $\eAVar$ ordered
    clockwise, so when we restrict $\eAVar.\vV$ to $\eAVar.\vV[:mid]$ on
    \alglnref{split:ell}, we are eliminating vertices not contained in
    $\eAVar_{\ell}$, so $\eAVar_{\ell}.\vV$ is correct.

    Next, we prove \stmtref{split-concat}.  Recall that  $\eAVar.\vV$ orders the
    vertices in decreasing angle with $e_1$.  In \alglnref{split:alpha}, the
    angle~$\measuredangle \pi(\eAVar.\vV[mid]-\eAVar.\thevert)$ is the angle
    made by $\eAVar.\thevert$ with the middle vertex.  We tilt this angle by
    $\theta/2$ on \alglnref{split:alpha} to obtain angle~$\alpha$.  By
    construction of $\alpha$, $$ \measuredangle
    \pi(\eAVar.\vV[mid]-\eAVar.\thevert) > \alpha.  $$ By definition of
    $\theta$, the angle $\alpha$ satisfies:
    $$
    \alpha >
    \measuredangle \pi(\eAVar.\vV[mid+1]-\eAVar.\thevert).$$
    Since the array~$\eAVar.\vV$ is sorted,
    all vectors in the set~$\pi(\eAVar.\vV[:mid]-\eAVar.\thevert)$
    have an angle of
    at least $\alpha$ with~$e_1$ and all vectors
    in~$\pi(\eAVar.\vV[:mid]-\eAVar.\thevert)$  have an angle of at most $\alpha$.

    By \alglnrefRange{split:numverts}{split:calcmid} and
    \alglnref{split:countell}, we know that $\eAVar.\vV$ contains
    the first $m=\lceil{\frac{1}{2}|\eAVar.\vV|}\rceil$ vertices in
    $\eAVar.\vV$.  Hence, \stmtref{split-size-ell} holds.
    \stmtref{split-size-r} follows from
    \stmtrefTwo{split-concat}{split-size-ell}.
\end{proof}

\subsection{Proof of \thmref{up-edges}}\label{append:up-edges-proof}
\upedge*
\begin{proof}
    First, we analyze the time complexity of the algorithm and the number of
    diagrams it requires. By \lemref{indegree}, \alglnref{up-edges:indeg} can be
    computed in $\theta(\log n)$ time (since we are given the diagram and do not
    need an additional oracle query). Storing $A.\vV$ by storing a
    pointer to~$V_v$, we initialize~$\eAStack$  and $E_v$ in
    \alglnrefTwo{up-edges:stack}{up-edges:initEv}
    in constant time.

    To analyze the complexity of the loop in
    \alglnrefRange{up-edges:loop}{up-edges:loopend}, we first note that this is a
    radial binary multi-search.  When processing an edge arc, we decide whether
    all edges have been found or if we need to split the edge arc.
    If there is only one edge in the arc (i.e., $\eAVar.\eC=1$), then this loop is a binary
    search for an edge, using the angle with $e_1$ in the~$(e_1,e_2)$-plane as
    the search key.  When~$\eAVar.\eC>1$, we search for all edges, finding them
    in decreasing angle order (since arcs with larger angles are added after
    arcs of smaller angles). The else-if statement in
    \alglnrefRange{up-edges:one-arc}{up-edges:endone-arc} is where the edges are
    added to $\eV$.  Note that this shortcuts additional edge arc splitting by
    stopping the process once we find that the number of edges in the arc is
    equal to the number of potential vertices that can form the edges.  As a
    result, each edge above $v$ contributes to~$O(\log n)$ edge arcs being added to
    $\eAStack$ and, in the case that every other vertex is incident to an edge
    with~$v$, we have~$\Theta(\log n)$ edge arcs added to the stack.
    All operations in the while loop are constant time, except
    splitting the edge arc object in \alglnref{up-edges:splitMain},
    which uses \splitDGMComp
    diagram and takes~$\splitTimeFull$~time.

    The complexity of \algref{up-edges} is dominated by the complexity of
    the while loop: the algorithm
    uses~$\upDGMCompFull$ augmented persistence diagrams and takes
    takes~$\upTimeFull$~time.

    To prove correctness of this algorithm, we track how edges are stored.
    In this algorithm, $\eAStack$ is a stack of edge arc objects. We abuse
    notation slightly and say
    that an edge $(v', v) \in E$ is in $\eAStack$ if there exists an
    edge arc~$A \in \eAStack$ such that $\measuredangle(v',v) \in (A.\alpha_1,
    A.\alpha_2)$.  In this case, we also say $(v',v)$ is in $A$.  We claim the while loop
    has the following loop invariant: 
    \begin{enumerate}[(i)]
        \item every outgoing edge from $v$ is
        either in $\eAStack$ or is in the list of edges $E_v$;
            \label{li:contained}
        \item the edges of $E_v$ are radially sorted
        clockwise order;
            \label{li:evorder}
        \item the edge arcs in $\eAStack$ are non-overlapping and
        in radially clockwise order; and, 
            \label{li:stackorder}
        \item for all $(v',v) \in E_v$ and $(v'',v) \in \eAStack$, the angle
            $\measuredangle(v',v)$ is larger than the
            angle~$\measuredangle(v'',v)$.
            \label{li:betweenorder}
    \end{enumerate}
    Entering the loop,~$E_v$ is empty and $\eAStack$ is a single edge arc
    accounting for the entire half space above $v$. 
    Thus, all outgoing edges are in
    $\eAStack$ and $E_v$ is empty, so the loop invariant is initially true.
    Denote the edge arc that is processed in iteration $j$ by
    $\eAVar^j$. Suppose
    entering iteration $j$, the loop invariant holds.
    In \alglnref{up-edges:pop}, we pop
    $\eAVar^j$ from $\eAStack$. There are three cases to consider.
    First, if $\eAVar^j.\eC = 0$, there are no edges
    in $\eAVar^j$, so we enter the if clause in \alglnref{up-edges:zero-edges}
    and continue to the top of the while loop.
    Since no edges were removed from $\eAStack$ when~$\eAVar^j$ was popped, we
    do not add any edges to $E_v$ nor do we change any ordering of the
    previously sorted edges or edge arcs in $\eAStack$.
    Thus, the loop invariant is maintained.
    Second, if $\eAVar^j.\eC = |\eAVar^j.\vV|$, we enter the else-if clause in
    \alglnref{up-edges:one-arc}.
    Here, we know the vertices in the current edge arc are in a one-to-one
    correspondence with the edges
    in $\eAVar^j$, so we add these edges in sorted order to $E_v$ in
    \alglnref{up-edges:addAll}; since the loop iteration was true upon entering
    the loop, these edges are between the edges already in $E_v$ and the
    remaining edges in $\eAStack$. Thus, \liref{stackorder}
    and \liref{betweenorder} are maintained. 
    Furthermore, since we add
    the edges of $\eAVar^j$ in radially clockwise order,
    $E_v$ is still radially sorted in clockwise order, satisfying \liref{evorder}.
    Since all edges of $\eAVar^j$ are now in $E_v$, \liref{contained} is
    satisfied.
    So again, the loop invariant is maintained.
    The third and final case is that $\eAVar.\eC \neq 0$ and $\eAVar^j.\eC \neq
    |\eAVar^j.\vV|$. Then, we split the current edge arc on
    \alglnref{up-edges:splitMain}, pushing the two new halves onto the stack in
    a radially clockwise order, maintaining \liref{stackorder}.
    In this case, $E_v$ remains as it was going into the loop and the edges in $\eAStack$
    remain the same, so the loop invariant is maintained.
    To finalize the proof of partial correctness, suppose the loop ends and that
    the loop invariant is true. Since the loop ends, $\eAStack$ is empty. By
    \liref{contained} and \liref{evorder}, all outgoing edges are in $E_v$ in
    radially clockwise order.
    The runtime analysis tells us that the
    algorithm terminates, and thus, the algorithm is correct.
\end{proof}

\subsection{Proof of \thmref{edgerec}}\label{append:find-edges-proof}

\edgerec*
\begin{proof}
    We first analyze the runtime and diagram count for \algref{find-edges} by
    walking through the algorithm line-by-line.
    In \alglnref{find-edges:getdgm}, we ask the oracle for the diagram in
    direction $-e_2$, which takes $\oracleTimeFull$ time.
    In~\cite[Theorem~14 (Edge
    Reconstruction)]{belton2019reconstructing},
    simultaneously find the cyclic ordering around all vertices
    in~$\Theta(n^2)$ time by Lemmas~$1$ and~$2$ of \cite{verma2011slow}.  In
    \alglnref{find-edges:cyclic}, we do that as well; however, we do not store vertices
    that are above~$v$ in the array~$\varcyclic[v]$, and thus this line
    takes~$\Theta(n^2)$ time.  We note that such a cyclic ordering exists around
    each vertex by \assref{projectedindep}.
    Once we have $\varcyclic$, to find the minimum angle defined by any
    three vertices of $V$, we check all angles between vectors~$\varcyclic[v][i]-v$
    and~$\varcyclic[v][i+1]-v$ in \alglnref{find-edges:theta} in~$\Theta(n+m)$
    time.

    The for loop in
    \alglnrefRange{find-edges:callAlg2Begin}{find-edges:callAlg2End} is repeated
    $n$ times, once for each vertex in $V$.
    To determine the order of processing the vertices in $V$, we follow the
    births in $\adgm_0$, in decreasing order (since $\adgm_0$ is the lower-star
    filtration in direction~$-e_2$).  Thus, finding the order takes $\Theta(n)$
    time.
    In each iteration, we compute the incoming edges (those whose other vertex
    is below~$v$) in
    \alglnref{find-edges:down-edges} followed by all outgoing edges (those whose
    other vertex is above $v$) in
    \alglnref{find-edges:up-edges}.  By \assitm{unique}, every edge is
    either incoming or outgoing with respect to direction~$e_2$.  Thus, all
    edges incident to~$v$ are in $E$ once $E$ is updated in
    \alglnref{find-edges:down-edges}.
    By \thmref{up-edges}, when processing vertex $v$, the call to \algref{up-edges} on
    \alglnref{find-edges:callup-edges}
    takes~$\upTimeFull$ time and uses~$\upDGMCompFull$ diagrams.
    Summing over all vertices, we see that the loop
    in \alglnrefRange{find-edges:callAlg2Begin}{find-edges:callAlg2End}
    takes
    \begin{align*}
        \sum_{v \in V} \upTimeFull \\
        = \Theta(m\log n(\splitTime))
    \end{align*}
    time and uses~$\sum_{v \in V} \upDGMCompFull =
    \edgeDGMCompFull$ augmented persistence diagrams.

    In total, \algref{find-edges}
    takes $\oracleTimeFull + \Theta(n^2) + \Theta(n+m) + \Theta(n) + \Theta(m\log n(\splitTime))
    = \edgeTimeFull$ time
    and
    uses $\Theta(1) + \edgeDGMCompFull = \edgeDGMCompFull$ diagrams.

    Next, we prove the correctness of \algref{find-edges} (i.e., that all edges
    are found).
    In order to process vertices in order of their heights in the $e_2$ direction, we first sort
    them in
    \alglnref{find-edges:callAlg2Begin}. For $1 \le j \le n$, let $v_j$ be the $j$\th vertex
    in this ordering.
    To show that \algref{find-edges} finds all edges in $E$, we
    consider
    the loop invariant (LI): when we process~$v_j$, all edges
    with maximum vertex height equal or less than the height of~$v_j$ are
    known. The LI is trivially true for~$v_1$. We now assume that it is true for
    iteration $j$, and show that
    it must be true for iteration~$j+1$. By assumption, all edges~$(v_i, v_j)$
    with~$1 \leq i < j$ are known, and so by \thmref{up-edges},
    \algref{up-edges} finds all edges $(v_k, v_j)$, where~$k>j$,
    and we add them to the edge set~$E$. Note that, by
    assumption, all edges~$(v_x, v_i)$ for~$1 \leq i \leq j$ are also already
    known, and so the invariant is maintained. Thus, after the loop terminates,
    all edges are~found.
\end{proof}

\section{Basis}\label{append:basis}

In \assitm{unique}, we assume all vertices of the underlying graph are unique with respect to the
first basis direction~$e_2$.
In this appendix, we provide details of how to \emph{find} a basis
where all vertices have a unique height with respect to the second basis
direction. I.e., this appendix allows us to remove one general position
assumption by showing it can be satisfied deterministically, at an added cost
of~$\tiltTimeFull$ time.

\begin{lemma}[Creation of Orthonormal Basis] \label{lem:create_basis}
    Given a point set $P \subset \R^d$ satisfying~\assitm{affine} and \assitm{projectedindep},
    we can use two
    diagrams and $\tiltTimeFull$ time to create the orthonormal basis $\basis$ so
    that all points of~$P$ have a unique height in direction $b_2$.
\end{lemma}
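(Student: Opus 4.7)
The plan is to find a direction $b_2$ in the $(e_1, e_2)$-plane that gives pairwise distinct heights to $P$, and then extend to a full orthonormal basis. The key observation is that a unit vector $b_2$ in this plane fails to distinguish $p_i$ from $p_j$ precisely when $b_2 \perp \pi(p_i - p_j)$; hence the set of forbidden angles in $[0, \pi)$ has size at most $\binom{|P|}{2}$ and is computable directly from $P$.

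First, I would query the diagram in direction $e_2$. Using the sorted data structure of \defref{oracle}, I can check in $\Theta(n \log n)$ time whether the $n$ vertex heights (recorded among the zero-dimensional events, via \lemref{count}) are all distinct. If they are, set $b_2 := e_2$ and stop. Otherwise, using the given coordinates of $P$, compute the forbidden angles by evaluating $\measuredangle \pi(p_i - p_j) + \pi/2$ modulo $\pi$ for each of the $\binom{|P|}{2}$ pairs, sort them, and choose $\alpha$ to lie strictly between two consecutive forbidden angles and as close to $\pi/2$ as convenient. Set $b_2 := (\cos\alpha, \sin\alpha, 0, \ldots, 0)$ and query the second diagram, in direction $b_2$, as verification that heights are now distinct. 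Finally, complete the basis by setting $b_1 := (\cos(\alpha - \pi/2), \sin(\alpha - \pi/2), 0, \ldots, 0)$ and $b_j := e_j$ for $j \geq 3$; orthonormality is then automatic.

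The main subtlety is to confirm that tilting inside the $(e_1, e_2)$-plane preserves the two given general position assumptions: \assitm{affine} is basis-independent and unaffected, and \assitm{projectedindep} is preserved because the new projection plane (spanned by $b_1$ and $b_2$) coincides as a subspace with the original $(e_1, e_2)$-plane, so collinearity of projected points is unchanged. The total runtime is $\Theta(n^2 \log n)$ for computing and sorting the forbidden angles, plus the cost of two oracle queries and $\Theta(d)$ for writing the basis vectors explicitly, which should match the claimed $\tiltTimeFull$ bound.
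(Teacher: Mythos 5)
Your core geometric observation is correct: a unit direction $b_2$ in the $(e_1,e_2)$-plane fails to separate $p_i$ and $p_j$ exactly when it is perpendicular to $\pi(p_i-p_j)$, so the set of ``bad'' angles is finite and any direction strictly between consecutive bad angles works. Your argument that assumptions~\ref{assitm:affine} and~\ref{assitm:projectedindep} survive is also correct, since the new projection plane is the same subspace as the old one. And the orthonormal completion via a $\pi/2$ rotation in that plane is fine (the paper does essentially the same thing via explicit Gram--Schmidt restricted to the first two coordinates). One small point you elide: for ``the perpendicular to $\pi(p_i-p_j)$'' to be well-defined you need $\pi(p_i)\neq\pi(p_j)$, which is a consequence of~\ref{assitm:projectedindep} and should be noted.

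The route, however, is genuinely different from the paper's, and the difference matters for the complexity claim. The paper does not enumerate forbidden directions at all; it invokes Algorithm~6 ($\algoName{Tilt}$) of~\cite{fasy2022faithful}, which perturbs $e_1$ just enough that the resulting direction realizes the lexicographic order (first by $e_1$-height, ties broken by $e_2$-height). This only requires the two sorted height lists, which is why it fits the $\tiltTimeFull$ budget and why the two diagrams play an essential algorithmic role there: they hand Tilt the sorted birth/death arrays directly. By contrast, your approach computes and sorts $\binom{n}{2}$ pairwise perpendiculars, which is $\Theta(n^2\log n)$ --- quadratically more than what a Tilt-style argument needs --- so it does not match the claimed $\tiltTimeFull$ bound. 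The two oracle queries in your version are also essentially decorative: with $P$ explicitly known, both the ``is $e_2$ already good?'' check and the final ``verification'' query can be done from the coordinates without touching the oracle, which is a sign that you are not using the diagrams the way the lemma intends. So: correct idea, correct basis construction, but a brute-force detour that overshoots the time budget, whereas the paper gets the tighter bound by outsourcing the tie-breaking to Tilt.
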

\begin{proof}
    Algorithm 6 ($\algoName{Tilt}$) of~\cite{fasy2022faithful}
    takes diagrams from two linearly independent directions~$\dir, \dir' \in
    \S^{d-1}$, the point set~$P$, and returns a
    direction~$\dir_*$ in~$\tiltTimeFull$ time\footnote{While \cite{fasy2022faithful} does not
    account for diagram computation time, there are two diagrams used in this
    process, hence our addition of~$\oracleTimeFull$ to the total runtime.}
    so that the following
    properties holds for all~$p_1,p_2 \in P$:
    \begin{enumerate}[(i)]
	\item If $p_1$ is strictly above (below) $p_2$ with respect to direction $\dir$,
	    then~$p_1$ is strictly above (below, respectively) $p_2$ with
            respect to direction $\dir_*$.\label{stmt:tilt-first}
        \item If $p_1$ and $p_2$ are at the same height with respect to
            direction~$\dir$ and $p_1$ is strictly above (below) $p_2$ with respect to
            direction~$\dir'$, then $p_1$ is strictly above (respectively,
            below) $p_2$ with respect to
            direction $\dir_*$. \label{stmt:tilt-second}
        \item If $p_1$ is is at the same height as $p_2$ with respect to
            both directions $\dir$ and $\dir'$, then $p_1$ and $p_2$ are at the
            same height with respect to direction
            $\dir_*$.\label{stmt:tilt-third}
    \end{enumerate}
    A proof of correctness is given in~\cite[Lemma 32
    (Tilt)]{fasy2022faithful}.

    We start with the standard basis for $\R^d$, $\{ e_1, e_2, \ldots, e_d\}$,
    and we replace the first two basis elements as follows.
    Let~$b_2$ be the direction obtain by using
    $\algoName{Tilt}$ with~$s=e_1$,~$s'=e_2$, and~$P=P$.

    By \assitm{projectedindep},
    no three points of $P$ are colinear when projected onto the first two coordinates. In
    particular, this means no two points share the same heights in both
    the~$e_1$ and $e_2$ directions. Then, by \stmtsref{tilt-first}{tilt-second}
    above, the direction~$b_2$ must order all vertices of $P$ uniquely.
    Using only the first
    two coordinates of $b_2$ and $e_1$, we then perform
    Gram Schmidt orthanormalization to compute the first two coordinates of
    $b_1$.
    More precisely, letting~$b_i^{(j)}$ denote the $j$th coordinate of $b_i$, we compute
    \begin{equation}
        \begin{pmatrix} b_1^{(1)} \\ b_1^{(2)}\end{pmatrix} = \begin{pmatrix} 1
            \\ 0 \end{pmatrix} - \frac{\bigg\langle \begin{pmatrix} b_2^{(1)} &
                b_2^{(2)}\end{pmatrix}^T, \begin{pmatrix} 1 & 0
    \end{pmatrix}^T\bigg\rangle}{\bigg|\bigg|\begin{pmatrix} b_2^{(1)} &
    b_2^{(2)}\end{pmatrix}^T \bigg|\bigg|^2}
            \begin{pmatrix} b_2^{(1)} \\ b_2^{(2)}\end{pmatrix}
    \end{equation}
    We then set $b_1^{(j)}=0$ for $2 < j \leq d$, so that $b_1 \in \text{span}
    \{e_1, e_2 \}$, $b_2 \perp b_1$, and~$||b_1||=1$.  Only considering the first
    two coordinates of $b_2$ and $e_1$ means this process takes constant time.
    The remaining~$e_i$ for~$2 \leq i \leq d$ can
    be used to fill the basis.
    Finally, we have a basis satisfying all assumptions of \assref, namely,
    $\basis$.
\end{proof}

\end{document}